\documentclass[a4paper,12pt,reqno]{amsart}
\usepackage{amssymb}
\usepackage{enumerate}
\usepackage{ifthen}
\usepackage{graphicx}
\nonstopmode\numberwithin{equation}{section}
\pagestyle{plain}

\setlength{\textwidth}{16cm}
\setlength{\textheight}{24cm}
\setlength{\oddsidemargin}{0cm}
\setlength{\topmargin}{-0.5cm}
\setlength{\evensidemargin}{0cm} \setlength{\footskip}{40pt}
\newtheorem{proposition}{Proposition}[section]

\newtheorem{theorem}{Theorem}[section]

\newtheorem{lemma}{Lemma}[section]

\newtheorem{remark}{Remark}[section]

\allowdisplaybreaks

\begin{document}
\title{Bounded analytic maps, Wall fractions  and $ABC$--flow}

\author{
Alexei Tsygvintsev
}
\address{
U.M.P.A, Ecole Normale Sup\'{e}rieure de Lyon\\
46, all\'{e}e d'Italie, F69364 Lyon Cedex 07
}
\email{alexei.tsygvintsev@ens-lyon.fr}

\bigskip
\begin{abstract}
In this work we study the  qualitative properties of real analytic bounded maps defined in the infinite complex strip.  The main tool is approximation  by   continued $g$--fractions of Wall \cite{W}.  As an application, the $ABC$--flow system is considered which is essential to the origin of the solar magnetic field \cite{David}. \end{abstract}.  

\subjclass[2000]{ 37C30, 30E05, 11J70}

\keywords{Continued fractions,  real analytic maps, dynamical systems, ABC flow, Navier--Stokes equations, fast dynamo}
\maketitle
\pagestyle{myheadings}
\markboth{ 
Alexei Tsygvintsev
}{
Continued fractions,  real analytic maps,  dynamical systems, ABC flow}

\section{Introduction}

In 1948 Hubert Wall introduced  the particular class of functional  continued fractions called  $g$--fractions. 
The  objective of the present study is to broaden our understanding of Wall's ideas in the dynamical system theory. 

In this section we will remind the reader of some key  facts from the analytic theory of continued fractions.    Let  $\mathbb H=\mathbb C_-\cup \mathbb C_+ \cup (-1,+\infty)$ where $\mathbb C_+=\{ z\in \mathbb C\, : \, \mathrm{Im}(z)>0\}$,  $\mathbb C_-=\{ z\in \mathbb C\, : \, \mathrm{Im}(z)<0\}$.

 For arbitrary  real sequence $g_i\in [0,1 ]$, $i\geq 1$ we call $g$--{\it fraction}  the continued  fraction 
\begin{align} \label{f4}
 g(z)=\{ g_1,g_2,... | z  \}=\dfrac{1}{1}
\begin{array}{cc}\\+\end{array}
\dfrac{g_1z}{1}
\begin{array}{cc}\\+\end{array}
\dfrac{(1-g_1)g_2z}{1}
\begin{array}{cc}\\+\end{array}
\dfrac{(1-g_2)g_3z}{1}
\cdots\,,
\end{align}
  converging  uniformly on compact sets  of $\mathbb H$ to an analytic function (see \cite{W} ).  The map $g$  is   rational   if and only if $g_k \in \{0,1\}$, for some $k\geq1$.  

In particular, if  $g_i=p\in (0,1)$, $i\geq 1$ then $g$ is algebraic and is given explicitly by 
\begin{equation}
\{p,p,\dots \mid z \}=\frac{ 2(1-p) }{ 1-2p+\sqrt{1+4p(1-p)z}   }, \quad z\in \mathbb C_-\cup \mathbb C_+ \cup (-{1}/{4p(1-p)},+\infty)\,.
\end{equation}
It is known that some ratios of hypergeometric functions can be expressed with help of $g$--fractions. Let $a,b,c$ are real constants satisfying $-1\leq a \leq c$, $0\leq b \leq c\neq 0$ and $F(a,b,c,z)$ be the hypergeometric function of Gauss. Then, as shown in  \cite{Kus}: 

\begin{equation}
\frac{F(a+1,b,c,-z)}{F(a,b,c,-z)}=\{ g_1,g_2,... | z  \}, \quad z \in \mathbb H\,,
\end{equation}
where
\begin{equation}
g_{2k}=\frac{a+k}{c+2k-1}, \quad   g_{2k-1}=\frac{b+k-1}{c+2k-2}, \quad k\geq 1\,.
\end{equation}
In the simplest case:
\begin{equation}
\frac{1}{z}\ln(1+z)=\frac{F(1,1,2,-z)}{F(0,1,2,-z)}\,.
\end{equation}

We define the truncated  continued $g$--fraction as  the $n$--order approximation of  \eqref{f4}:
 \begin{align} \label{truncated1}
\{g_1,g_2,...,g_n| z\}=\dfrac{1}{1}
\begin{array}{cc}\\+\end{array}
\dfrac{g_1z}{1}
\begin{array}{cc}\\+\end{array}
\dfrac{(1-g_1)g_2z}{1}
\begin{array}{cc}\\   \cdots \end{array}
\dfrac{(1-g_{n-1})g_nz}{1}\,,
\end{align}
which is a rational function of $z$ analytic in $\mathbb H$.

The next result is due to Gragg \cite{Gragg}: 

\begin{theorem}{(Gragg \cite{Gragg})}
Let $z\in(-1,+\infty)$, then the $n$--order truncation error satisfies:
\begin{equation}
|    \{ g_1,g_2,... | z  \}-   \{ g_1,g_2,...,g_n | z  \}  |\leq \left | 1- \frac{1}{1+z}\right | \left |  \frac{1-\sqrt{1+z  }}{ 1+\sqrt{1+z  } } \right |^n\,,
\end{equation}
and does not depend on values of $g_i$, $i\geq 1$.
\end{theorem}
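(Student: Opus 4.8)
The plan is to work throughout with the representation of the continued fraction as a composition of M\"obius maps and to reduce the estimate to the single inequality $g_k(1-g_k)\le\tfrac14$, which is what makes the constant fraction $g_i\equiv\tfrac12$ extremal. Write the $g$-fraction in the standard form $\mathbf{K}(a_k/1)$ with $a_1=1$ and $a_{k+1}=(1-g_{k-1})g_k z$ for $k\ge1$ (setting $g_0=0$), and let $s_k(w)=a_k/(1+w)$, $S_m=s_1\circ\cdots\circ s_m$. Denoting by $A_m,B_m$ the canonical numerators and denominators, one has $S_m(w)=(A_m+A_{m-1}w)/(B_m+B_{m-1}w)$, so the truncation $C_n:=\{g_1,\dots,g_n\mid z\}$ equals $S_{n+1}(0)$, while the full fraction equals $S_{n+1}(r_{n+1})$, where the $m$-th tail $r_m$ is defined by the backward recursion $r_{m}=a_{m+1}/(1+r_{m+1})$. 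First I would record the exact error formula obtained from the determinant identity $A_mB_{m-1}-A_{m-1}B_m=(-1)^{m-1}\prod_{j\le m}a_j$:
$$
\{g_1,g_2,\dots\mid z\}-\{g_1,\dots,g_n\mid z\}=\frac{(-1)^{n+1}\bigl(\prod_{j=1}^{n+1}a_j\bigr)\,r_{n+1}}{B_{n+1}\,(B_{n+1}+B_n r_{n+1})}.
$$
For $z\in(-1,+\infty)$ all the $a_k$, $r_m$, $B_m$ are real, which is what allows the modulus to be estimated factor by factor.

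Second, I would simplify the product. Reindexing telescopes the mixed factors $(1-g_{k-1})g_k$ into diagonal ones:
$$
\prod_{j=1}^{n+1}a_j=z^{\,n}\,g_n\prod_{k=1}^{n-1}g_k(1-g_k).
$$
The arithmetic--geometric mean inequality gives $g_k(1-g_k)\le\tfrac14$ with equality exactly at $g_k=\tfrac12$, hence $|\prod_{j\le n+1}a_j|\le |z|^{\,n}4^{-(n-1)}$. This is the step in which the value $\tfrac14=\tfrac{z/4}{z}$ of the extremal periodic fraction appears: it produces precisely the correct power of $z$ together with a factor governed by the element $z/4$ rather than $z$ itself, and it is this coupling of consecutive $g_k$ that improves the estimate from the Worpitzky-type rate $\frac{\sqrt{1+4z}-1}{\sqrt{1+4z}+1}$ to the claimed $\frac{\sqrt{1+z}-1}{\sqrt{1+z}+1}$.

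The heart of the argument, and the main obstacle, is then to bound the remaining factor $|r_{n+1}|\,\big/\,\bigl(|B_{n+1}|\,|B_{n+1}+B_nr_{n+1}|\bigr)$ uniformly in the sequence $(g_i)$, compensating the growth of $\prod g_k(1-g_k)$ by the growth of the denominators. Setting $u=\sqrt{1+z}$, the benchmark is the periodic fraction $g_i\equiv\tfrac12$: its M\"obius step $s(w)=(z/4)/(1+w)$ has fixed points $w_\pm=(-1\pm u)/2$ and multiplier $K=(1-u)/(1+u)$, its tails stabilise at $r_m=w_+=(u-1)/2$, and its denominators satisfy $B_m\sim\bigl((1+u)/2\bigr)^m$. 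The plan is to show, via Wall's value-region theory for $g$-fractions \cite{W} (which confines every tail $r_m$ to an explicit interval independent of the $g_i$) together with an induction on the three-term recurrence $B_{m+1}=B_m+(1-g_{m-1})g_m z\,B_{m-1}$, that the normalised denominators $B_m\big/\prod_{k\le m}\sqrt{g_k(1-g_k)}$ dominate $\bigl((1+u)/2\bigr)^m$. Combining this lower bound with the product estimate of the previous step and assembling the powers then reproduces the right-hand side $\bigl|1-\tfrac1{1+z}\bigr|\,\bigl|\tfrac{1-u}{1+u}\bigr|^n$, the factor $|1-\tfrac1{1+z}|=|z|/(1+z)=(u^2-1)/u^2$ coming from the first M\"obius map $s_1$ and the leftover denominator at level $n$.

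Finally I would separate the two sign regimes. For $z>0$ one has $a_k>0$, all tails positive and all $B_m\ge1$ increasing, so the monotone even/odd bracketing of the limit by its convergents makes every estimate transparent; for $-1<z<0$ the $a_k$ are negative and the delicate point is to verify that the denominators $B_{n+1}+B_nr_{n+1}$ never vanish and that the tails stay in the controlled interval, which is exactly where the hypothesis $z>-1$ (equivalently $u>0$, $|K|<1$) is used, the degenerate limit $z=-1$, $u=0$, $|K|=1$ being excluded. As $n\to\infty$ the estimate is sharp for $g_i\equiv\tfrac12$, which simultaneously confirms that the constant is best possible and explains why the bound does not depend on the individual $g_i$.
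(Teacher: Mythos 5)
First, a point of comparison: the paper does not prove this statement at all. It is Gragg's theorem, quoted with the citation \cite{Gragg} as an imported result, so there is no ``paper's own proof'' to match your argument against; your attempt has to stand on its own (or be measured against Gragg's original nested value-region argument in Numer.\ Math.\ 11, 1968).

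Judged on its own, your proposal has a genuine gap at its core. The preparatory steps are correct: the exact error formula
$\{g_1,g_2,\dots\mid z\}-\{g_1,\dots,g_n\mid z\}=(-1)^{n+1}\bigl(\prod_{j=1}^{n+1}a_j\bigr)r_{n+1}\big/\bigl(B_{n+1}(B_{n+1}+B_nr_{n+1})\bigr)$ follows from the determinant identity, and the telescoping $\prod_{j=1}^{n+1}a_j=z^{\,n}g_n\prod_{k=1}^{n-1}g_k(1-g_k)$ is right. But the entire analytic content of the theorem is concentrated in your third step, which you only announce (``the plan is to show\dots'') and never carry out. Worse, the inequality you propose to prove, $B_m\big/\prod_{k\le m}\sqrt{g_k(1-g_k)}\ge\bigl((1+u)/2\bigr)^m$, is false as stated: already $B_1=1$, so at $m=1$ the claim fails for all sufficiently large $z>0$ under any indexing of the product; at best a version with an extra $g$-independent constant could hold, and no such version is established. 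Even granting a corrected denominator bound, the assembly does not close: the numerator carries $g_n\prod_{k=1}^{n-1}g_k(1-g_k)$ while squaring your normalised bound supplies $\prod_{k=1}^{n}g_k(1-g_k)$, leaving an uncancelled factor $1/(1-g_n)$ that blows up as $g_n\to1$; the factors $|r_{n+1}|$ and $|B_{n+1}+B_nr_{n+1}|$ are never estimated (for $-1<z<0$ keeping the latter away from zero uniformly in the $g_i$ is precisely the delicate point, and ``Wall's value-region theory'' is invoked but not used quantitatively); and your normalisation divides by zero in the admissible degenerate cases $g_k\in\{0,1\}$, for which the theorem must still hold. Obtaining the sharp contraction factor $\bigl|\tfrac{1-\sqrt{1+z}}{1+\sqrt{1+z}}\bigr|$ uniformly in $(g_i)$ is exactly the difficulty that Gragg's approach is designed to avoid, by tracking nested images of value regions under the M\"obius maps rather than bounding numerators and denominators separately; as it stands, your text is a plausible strategy outline with a heuristic identification of the extremal case $g_i\equiv\tfrac12$, not a proof.
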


The real  {\it a priori} bounds for the $g$--fraction are given by the next result:
\begin{theorem} 
\emph{(\cite{T1})}
\label{T1}\\
\noindent a) Let $k=2n+1$, $n=0,1,...$,  then
\begin{equation}
A_k(z)\leq g(z)\leq B_k(z),\quad -1<z<+\infty \,,
\end{equation}
where
\begin{equation}
A_k(z)=\{g_1,g_2,...,g_k| z\}, \quad  B_k(z)=\{g_1,g_2,...,g_k,1 | z\}\,.
\end{equation}

\noindent b)  Let $k=2n$, $n=1,2,...$, then
\begin{equation}
A_k^+(z)\leq g(z)\leq B_k^+(z),\quad 0\leq z<+\infty,
\end{equation}
\begin{equation}
A_k^-(z)\leq g(z)\leq B_k^-(z),\quad -1<z<0,
\end{equation}
where 
\begin{equation}
A_k^+(z)=\{g_1,g_2,...,g_k,1 | z\}, \quad B_k^+=\{g_1,g_2,...,g_k| z\} \,,
\end{equation}
and $A_k^-=B_k^+$, $B_k^-=A_k ^+$.
\end{theorem}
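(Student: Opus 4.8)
The plan is to represent $g(z)$ through its $k$-th \emph{tail}. Write the $g$--fraction as $g(z)=\dfrac{1}{1+u(z)}$, where $u(z)$ is the continued fraction with partial numerators $a_1=g_1z$, $a_j=(1-g_{j-1})g_jz$ for $j\ge 2$, and all partial denominators equal to $1$. Truncating after the $k$-th partial numerator and reinstating the discarded remainder as a single variable $w$ exhibits $g(z)$ as a Möbius transform of the tail:
\[
g(z)=F_k(w_k),\qquad F_k(w)=\frac{q_k+w\,q_{k-1}}{(q_k+p_k)+w\,(q_{k-1}+p_{k-1})},
\]
where $p_j/q_j$ are the convergents of $u$ and $w_k$ is the true tail value. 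The two finite expressions in the statement are then values of the \emph{same} map: one checks directly that $A_k(z)=\{g_1,\dots,g_k\mid z\}=F_k(0)$, while appending $g_{k+1}=1$ contributes exactly the partial numerator $(1-g_k)z$ to the tail, so $\{g_1,\dots,g_k,1\mid z\}=F_k\big((1-g_k)z\big)$.

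Next I would fix the direction of monotonicity of $F_k$. A short computation gives that $F_k'(w)$ has the constant sign of $p_kq_{k-1}-p_{k-1}q_k=(-1)^{k-1}\prod_{i=1}^{k}a_i$, and since $\prod_i a_i=z^{k}\prod_i(\text{nonnegative})$ this sign is $(-1)^{k-1}$ for $z>0$ and $-1$ for $-1<z<0$. Hence, for $z>0$, $F_k$ is increasing when $k$ is odd and decreasing when $k$ is even, whereas for $-1<z<0$ it is decreasing for every $k$. This is precisely the alternation separating parts a) and b), and the two sign regimes within b). Provided the pole of $F_k$ is avoided, applying the relevant monotone $F_k$ to the endpoints of the tail interval produces the brackets and determines which endpoint is labelled $A$ and which $B$.

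The analytic core is the tail estimate. A direct manipulation yields $w_k=(1-g_k)\dfrac{1-G_k}{G_k}$ with $G_k=\{g_{k+1},g_{k+2},\dots\mid z\}$ again a $g$--fraction, so everything reduces to the elementary two--sided bound
\[
\tfrac{1}{1+z}\le \{h_1,h_2,\dots\mid z\}\le 1\quad (z\ge 0),\qquad 1\le\{h_1,h_2,\dots\mid z\}\le\tfrac{1}{1+z}\quad(-1<z<0),
\]
valid for all $h_i\in[0,1]$. For $z\ge 0$ both inequalities follow immediately from the self--similar identity $\tfrac1G=1+\tfrac{h_1z}{1+(\text{tail})}$ together with positivity of the tail and $h_1\le 1$. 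Inserting these bounds into the formula for $w_k$ places $w_k$ between $0$ and $(1-g_k)z$ (the interval reversed for $z<0$), and the first two steps then close the argument.

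The main obstacle is the regime $-1<z<0$, where the partial numerators are negative. There one must verify, by induction over the truncations, that every continued--fraction denominator stays strictly positive, so that no convergent and no $F_k$ meets a pole, and that the tail bound $G_k\in[1,\tfrac{1}{1+z}]$ genuinely holds. Once positivity of the denominators is secured — using $-1<z$ and $h_i\in[0,1]$ to show the recursive Möbius maps preserve the interval $[1,\tfrac{1}{1+z}]$ — the monotonicity and bracketing of the earlier steps go through verbatim, and matching the sign of $(-1)^{k-1}\prod_i a_i$ against the parity of $k$ reproduces the precise labelling $A_k^{\pm},B_k^{\pm}$.
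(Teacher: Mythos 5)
There is no in-paper proof to compare against: the paper states this theorem with a citation to \cite{T1} and gives no argument of its own, so your proposal can only be judged on its internal soundness. On that score it is correct. The three pillars all check out: (i) appending $g_{k+1}=1$ really does terminate the fraction (the next partial numerator carries the factor $1-g_{k+1}=0$), so $\{g_1,\dots,g_k,1\mid z\}=F_k\bigl((1-g_k)z\bigr)$ while $\{g_1,\dots,g_k\mid z\}=F_k(0)$; (ii) the derivative of $F_k$ has the sign of $p_kq_{k-1}-p_{k-1}q_k=(-1)^{k-1}a_1\cdots a_k$, and $a_1\cdots a_k$ has the sign of $z^k$ (up to nonnegative factors), which yields exactly the parity pattern of the statement: $F_k$ increasing for odd $k$ and decreasing for even $k$ when $z>0$, and decreasing for every $k$ when $-1<z<0$; (iii) the tail identity $w_k=(1-g_k)(1-G_k)/G_k$ with $G_k=\{g_{k+1},g_{k+2},\dots\mid z\}$ is correct and reduces the theorem to the two-sided bound $G_k\in[\tfrac{1}{1+z},1]$ for $z\ge 0$, respectively $G_k\in[1,\tfrac{1}{1+z}]$ for $-1<z<0$.

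For the negative range, which you rightly flag as the main obstacle, your interval-preservation plan does close. Writing the one-step map as $\Phi_h(y)=\dfrac{hy+(1-h)}{h(1+z)y+(1-h)}$, one checks $\Phi_h(1)=\tfrac{1}{1+hz}$ and $\Phi_h\bigl(\tfrac{1}{1+z}\bigr)=\tfrac{1+(1-h)z}{1+z}$, both of which lie in $[1,\tfrac{1}{1+z}]$ with strictly positive denominators; hence every truncation of $G_k$, and then the limit, stays in this interval. Moreover, since the denominator of $F_k$ is affine in $w$, its positivity at the two endpoints $w=0$ and $w=(1-g_k)z$ (these are the denominators of the two finite fractions in the statement) forces positivity on the whole interval, so no pole is met. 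Two small caveats to add in a final write-up: if some $g_j\in\{0,1\}$ for $j\le k$ then $a_1\cdots a_k=0$ and $F_k$ is constant, so the inequalities degenerate to equalities (harmless, but the ``constant sign'' claim should be phrased to allow sign zero); and the identity $g(z)=F_k(w_k)$ for the infinite fraction uses the convergence of the tails $G_k$, which is supplied by Wall's convergence theorem already quoted in the paper.
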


Using the above formulas we write below  the rational {\it a priori} bounds for the $g$--fraction  \eqref{f4}  corresponding to  $k=1,2,3$:

\noindent Case $k=1$.

\begin{equation}  \label{rrr}
A_1(z)=\frac{1}{1+g_1z},\quad B_1(z)=\frac{1+(1-g_1)z}{1+z}\,.
\end{equation}

\noindent Case $k=2$.

\begin{equation} \label{bounds2}
A_2^+(z)=\frac{(1-g_1g_2)z+1}{(1+z)(g_1(1-g_2)z+1)}, \quad B_2^+(z)=\frac{g_2(1-g_1)z+1}{(g_1-g_1g_2+g_2)z+1}\,,
\end{equation}
\begin{equation}  \label{bounds22}
 A_2^-=B_2^+, B_2^-=A_2^+\,.
\end{equation}

\noindent Case $k=3$.

\begin{equation}
A_3(z)=\frac{(g_3+g_2-g_3g_2-g_2g_1)z+1}{g_1g_3(1-g_2)z^2+(g_3+g_2+g_1-g_3g_2-g_1g_2)z+1}\,.
\end{equation}

\begin{equation}
B_3(z)=\frac{g_2(1-g_3)(1-g_1)z ^2+(1+g_2-g_3g_2-g_1g_2)z+1}{(1+z)((g_1+g_2-g_3g_2-g_1g_2)z+1)}\,.
\end{equation}

\begin{figure}[h] 
\includegraphics[scale=0.500,angle=0]{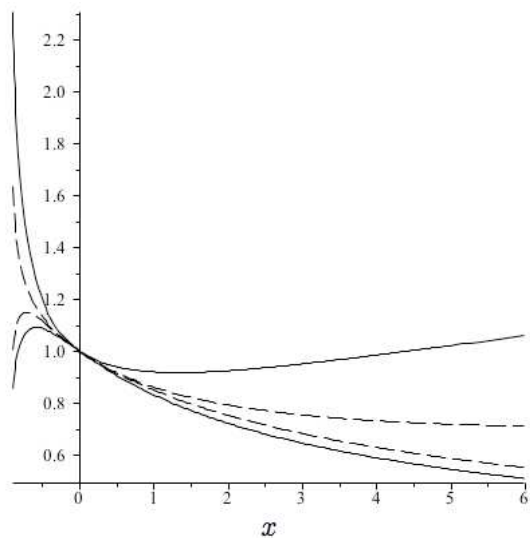} \caption{\label{fig2} Bounds  $ r(x) A_1(x) $,  $r(x)B_1(x)$ (bold line)   and  $r(x)A_2^{\pm}(x)$, $r(x)B_2^{\pm}(x)$ (dashed  line) for $x\in (-0.9,6)$,  $g_1=0.7$, $g_2=0.3$, $r(x)=\sqrt{1+x}$. }
\end{figure}

Let us put $g_1=0.7$, $g_2=0.3$ so that the rational bounds of order $1$ and $2$ given respectively by    ($A_1$, $B_1$) and ( $A^{\pm}_2$, $B^{\pm}_2$)  are defined.   The Figure \ref{fig2} illustrates  then the mutual position of  graphs of these functions, multiplied by $\sqrt{1+x}$ for $x \in (-0.9,6)$. As follows from Theorem \ref{T1}, the approximation of order $2$ is more precise  than the one given by order $1$.

The interesting link between $g$--fractions and probability theory was reported by Gerl  \cite{Gerl}.  One considers the nearest-neighbour random walks $X_n$, $n=0,1,2,...$ on $\mathbb N_0=\{0,1,2,...\}$ with the one-step transition  probabilities $p_{i,k}=\mathrm{Prob}[X_{n+1}=k\mid X_n=i ]$ defined by
\begin{equation}
p_{0, 1}=1,\quad p_{j,j-1}=g_j,\quad p_{j,j+1}=1-g_j\,
\end{equation}
with $0<g_j<1$, $j\geq 1$ and $p_{j,k}=0$ in any other case.

We introduce $p_{0,0}^{2n}=\mathrm{Prob}[X_{2n}=0 \mid X_0=0]$-- the probability of return to $0$ in $2n$ steps. The generating function for this sequence
\begin{equation}
G_0(z)=\sum _{n=0}^{\infty}\, (-1)^n\, p_{0,0}^{2n}\, z^n,\, 
\end{equation}
can be  written then   as a $g$--fraction:
\begin{equation}
G_0(z)=  \{ g_1,g_2,... | z  \}\,.
\end{equation}
Some other applications of $g$-fractions can be found in \cite{T1,T2,T3,T4}.

In the next section we will examine the relation between $g$--fractions and real analytic bounded functions.

\section{Functions bounded in the complex strip}

We denote   by $\mathbb A_{M,B}$   the set  of functions  $f(z)$ satisfying the following conditions

\noindent a) $f$  is holomorphic in the infinite strip $S_B=\{ z\in \mathbb C\, : | \mathrm{Im}(z) | <B \}$ , $B>0$.

\noindent b)  $f(\mathbb R)\subset \mathbb R$.

\noindent c)  $| f(z) | < M$, $M>0$,   $\forall $  $z\in S_B$.

\begin{figure}[h]
\includegraphics[scale=0.600,angle=0]{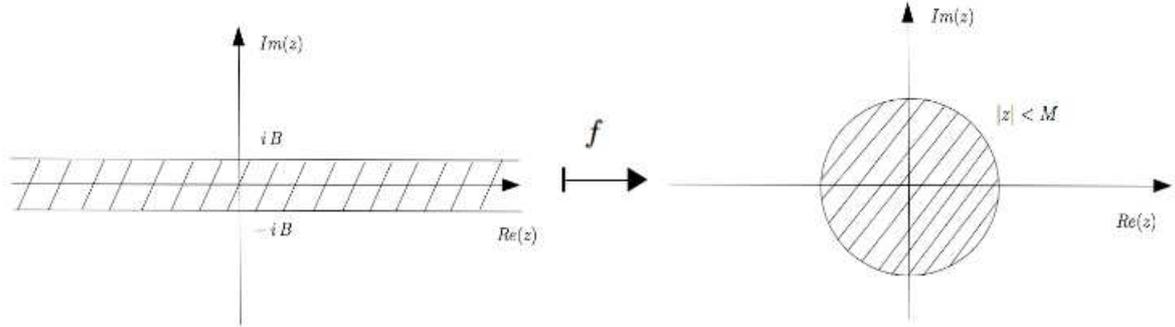} \caption{\label{fig4}  Map defined by  $f\in \mathbb A_{M,B}.$}
\end{figure}

Our   goal is to describe   the   $g$--fraction representation for the   elements of   $\mathbb A_{M,B}$ .
Firstly, we note that  it is sufficient to characterize only the functions from the class $\mathbb A_{1,\pi}$,  since   $f\in \mathbb A_{M,B}$ if and only if  $\theta(z)=\frac{1}{M}{f(Bz}/{\pi})\in  \mathbb A_{1,\pi}$.
The answer is given by the following theorem

\begin{theorem} \label{mainTh}
Let $\theta$ be a function holomorphic in the infinite complex strip  $S=\{ z\in \mathbb C\, : | \mathrm{Im}(z) | <\pi \}$, $|\theta(z)|<1$, $\forall \, z\in S$ and $\theta(\mathbb R)\subset \mathbb R$.  Then there exists a $g$-continued fraction 
\begin{align}  \label{maing}
 g(z)=\dfrac{1}{1}
\begin{array}{cc}\\+\end{array}
\dfrac{g_1z}{1}
\begin{array}{cc}\\+\end{array}
\dfrac{(1-g_1)g_2z}{1}
\begin{array}{cc}\\+\end{array}
\dfrac{(1-g_2)g_3z}{1}
\cdots\,,
\end{align}
with $g_k\in [0,1]$, $\forall \, k\geq 1$  such that
\begin{equation} \label{maingstrip}
 \theta(z)=1- \frac{2}{\mu\, e^{z/2}g(e^z-1)+1}, \quad z\in S\,,
\end{equation}
where  $\mu=\displaystyle\frac{1+\theta(0)}{1-\theta(0)}$.

\end{theorem}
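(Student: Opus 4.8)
\emph{Proof proposal.} The plan is to invert the relation \eqref{maingstrip} to obtain an explicit candidate for $g$, and then to recognise this candidate as a $g$-fraction through the classical correspondence between $g$-fractions and Stieltjes transforms of probability measures on $[0,1]$. Writing $\Theta(z)=\dfrac{1+\theta(z)}{1-\theta(z)}$, which is legitimate since $|\theta|<1$ keeps $1-\theta$ from vanishing, \eqref{maingstrip} rearranges to $\mu\,e^{z/2}g(e^z-1)=\Theta(z)$. After the substitution $w=e^z-1$, so that $z=\log(1+w)$ and $e^{z/2}=\sqrt{1+w}$ with principal branches, this reads
\begin{equation}
g(w)=\frac{\Theta(\log(1+w))}{\mu\,\sqrt{1+w}}, \qquad \mu=\frac{1+\theta(0)}{1-\theta(0)}>0 .
\end{equation}
Since $z\mapsto e^z-1$ maps the strip $S$ biholomorphically onto $\mathbb H=\mathbb C\setminus(-\infty,-1]$, this $g$ is holomorphic on $\mathbb H$; moreover $g$ is real (indeed positive) on $(-1,+\infty)$, and $g(0)=1$ by the definition of $\mu$.

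The task is therefore reduced to producing an integral representation $g(w)=\int_0^1\frac{d\psi(t)}{1+tw}$ for some probability measure $\psi$ on $[0,1]$: by Wall's theory \cite{W} these are exactly the functions expandable as a convergent $g$-fraction $\{g_1,g_2,\dots\mid z\}$ of the form \eqref{maing} with every $g_k\in[0,1]$, and the proof is complete once such a $\psi$ is found.

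The cleanest route to that representation uses the variable $\zeta=e^{z/2}=\sqrt{1+w}$, which maps $S$ conformally onto the right half-plane $\{\mathrm{Re}\,\zeta>0\}$ and sends $\mathbb R$ onto $(0,+\infty)$, with $\zeta^2=1+w$. Setting $P(\zeta):=\mu\,\zeta\,g(\zeta^2-1)=\Theta(2\log\zeta)$, the hypothesis $|\theta|<1$ forces $\mathrm{Re}\,P>0$, while $\theta(\mathbb R)\subset\mathbb R$ gives $P\bigl((0,+\infty)\bigr)\subset(0,+\infty)$ and $P(1)=\mu$. Thus $P$ is a Carath\'eodory function on the half-plane, real on the positive axis, and I would invoke the Herglotz--Nevanlinna representation for such functions to write
\begin{equation}
P(\zeta)=\frac{\mu}{2}\int_{[0,\infty)}\Bigl(\frac{1}{\zeta-i\lambda}+\frac{1}{\zeta+i\lambda}\Bigr)\,d\tau(\lambda),
\end{equation}
where $\tau\ge0$ is a positive measure on the boundary (plus possibly a nonnegative linear term, which will correspond to an atom at the endpoint $t=0$ below); positivity of $\tau$ is precisely the positivity of $\mathrm{Re}\,P$, and the symmetric pairing of $\pm i\lambda$ encodes that $P$ is real on $(0,+\infty)$. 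Writing each symmetric pair as $\tfrac12\bigl(\tfrac{1}{\zeta-i\lambda}+\tfrac{1}{\zeta+i\lambda}\bigr)=\tfrac{\zeta}{\zeta^2+\lambda^2}$, using $\zeta^2=1+w$, and reparametrising by $t=\tfrac{1}{1+\lambda^2}\in(0,1]$ (so that $\zeta^2+\lambda^2=\tfrac{1+tw}{t}$), one divides $P$ by $\mu\zeta$ to obtain $g(w)=\int_0^1\frac{d\psi(t)}{1+tw}$ with $d\psi(t)=t\,d\tau(\lambda(t))\ge0$. The support lands automatically in $[0,1]$ because $\lambda\in[0,\infty]$, and the normalisation $P(1)=\mu$ gives $\psi([0,1])=g(0)=1$, so $\psi$ is a probability measure and Wall's correspondence yields the required $g_k\in[0,1]$.

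The main obstacle is the middle step: rigorously justifying the Herglotz representation of $P$ with a \emph{genuinely positive} representing measure, controlling the boundary regularity of $\Theta$ on $\partial S$ (where any singular part of the measure lives) and the behaviour near $\lambda=\infty$ (equivalently $t=0$, $w\to+\infty$), and then checking that the change of variables $t=1/(1+\lambda^2)$ really transports $\tau$ to a finite probability measure supported in $[0,1]$. The strict bound $|\theta|<1$ is exactly what guarantees $\mathrm{Re}\,P>0$, hence positivity of the measure and the bounds $0\le g_k\le1$; the degenerate situations ($\theta$ constant, or $\psi$ finitely supported, which corresponds to rational $g$ with some $g_k\in\{0,1\}$) should be recorded separately but cause no difficulty.
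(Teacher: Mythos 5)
Your proposal is correct, and it follows the paper's skeleton---invert \eqref{maingstrip}, transport everything to the cut plane $\mathbb H$, produce a Stieltjes integral representation $\int_0^1 \frac{d\psi(t)}{1+tw}$, and finish with Wall's correspondence between such integrals and $g$-fractions with $g_k\in[0,1]$---but it differs at the key middle step. The paper obtains the integral representation in one stroke by citing Wall's theorem (\cite{W}, p.~279): the composition $F=m\circ\theta\circ l$ is holomorphic in $\mathbb H$, maps it into the right half-plane, is real on $(-1,+\infty)$, and hence equals $\mu\sqrt{1+z}\int_0^1\frac{d\mu(u)}{1+zu}$. Your function $P(\zeta)=\mu\,\zeta\,g(\zeta^2-1)$ is exactly this $F$ read in the variable $\zeta=\sqrt{1+w}$, so your Herglotz--Nevanlinna argument amounts to reproving that theorem of Wall from scratch: the symmetrized kernel $\tfrac12\bigl(\tfrac{1}{\zeta-i\lambda}+\tfrac{1}{\zeta+i\lambda}\bigr)=\tfrac{\zeta}{\zeta^2+\lambda^2}$ together with the substitution $t=1/(1+\lambda^2)$ is precisely what generates the factor $\sqrt{1+w}$ and a measure supported in $[0,1]$. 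What your route buys is self-containedness (only the classical half-plane Herglotz representation and Wall's $g$-fraction/Hausdorff-moment correspondence are invoked); what it costs is redoing work the paper outsources. One further remark: the ``main obstacle'' you flag is not actually an obstacle. The Herglotz--Nevanlinna theorem requires no boundary regularity of $\Theta$ on $\partial S$ whatsoever---the representing measure is produced by Helly/weak-$*$ compactness from positivity of $\mathrm{Re}\,P$ alone; realness of $P$ on $(0,+\infty)$ gives $P(\bar\zeta)=\overline{P(\zeta)}$ by Schwarz reflection, whence the symmetry of the measure; the linear term $\beta\zeta$, $\beta\ge 0$, becomes the atom of $\psi$ at $t=0$ exactly as you say; finiteness of $\psi$ follows from the condition $\int\frac{d\tau(\lambda)}{1+\lambda^2}<\infty$ built into the representation theorem, and the evaluation $P(1)=\mu$ normalizes $\psi$ to a probability measure. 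With these standard citations made explicit, your argument closes.
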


\begin{proof}

We introduce the complex domains
\begin{equation} \label{domains}
\mathbb D=\{ z\in \mathbb C \,: | z|<1  \}, \quad \mathbb H_+=\{ z\in \mathbb C \,: \mathrm{Re}(z)>0  \}\,,
\end{equation}
and   the conformal maps defined by:   
\begin{equation} \label{mapm}
m(z)=\frac{1+z}{1-z},  \quad  m \,: \mathbb D \to \mathbb H_+\,,
\end{equation} 
and 
\begin{equation} \label{hhh}
 l(z)=\log(1+z), \quad  l\,:   \mathbb H \to S\,,
\end{equation}
\begin{equation} \label{rtr}
\eta(z)=l^{-1}(z)=e^z-1\,.
\end{equation}
We note that $\eta$   is a bijection between $\mathbb R$ and $(-1,+\infty)$.

One verifies that the composition $F=m \circ f \circ l $ is holomorphic in $\mathbb H$ and $F(\mathbb H)\subset \mathbb H_+$
with $F(z)\in \mathbb  R$ for $z>-1$.
Thus, according to theorem of Wall \cite{W}, p. 279 $F$ can be written as follows
\begin{equation}
F(z)=\mu \sqrt{1+z} \int_0^1 \displaystyle \frac{d \mu(u)}{1+zu}\,,
\end{equation}
for some nondecreasing real bounded function $\mu(u)$, $u\in (0,1)$ and $\mu>0$.

For $f=m^{-1}\circ F\circ \eta$ one obtains the following formula
\begin{equation} \label{g12}
f(z)= \left(  1-\displaystyle \frac{2}{\mu e^{z/2} \int_0^1\displaystyle\frac{d\mu(u) }{1+
( e^z-1)u}+1}     \right).
\end{equation}
The integral in \eqref{g12} can be transformed to the continued  $g$--fraction form  \cite{W}
\begin{equation}
   \int^1_0 \displaystyle \frac{d\mu (u)}{1+ (e^z-1)u}       
   =  \left  \{g_1,g_2,... | e^z-1 \right\}, \quad \mathrm{for \, \, some} \quad g_k\in[0,1]\,,
\end{equation}
that together with \eqref{g12} implies  \eqref{maingstrip}. End of proof.
\end{proof}

To calculate the coefficients $g_p$ in \eqref{maing}  one has  formulas:
\begin{equation}
g_p=C_p(\theta(0),\theta'(0),..., \theta^{(p)}(0) ), \quad p\geq 1\,,
\end{equation}
with rational functions $C_p$  determined by calculation of derivatives  of both sides of \eqref{maingstrip} at $z=0$.   
The  recurrent formulas for  all $C_p$ can be derived from \cite{W}, p. 203.

Introducing
\begin{equation} \label{thetan}
 \theta_n=\theta^{(n)}(0), \quad n\geq 0\,,
\end{equation}
 we provide below explicit formulas for $g_1$ and  $g_2$: 

\begin{equation} \label{g1strip}
g_1= \frac{1}{2} \, \frac{ 1-4\theta_1-\theta_0^2} { 1-\theta_0^2  }   \,,
\end{equation}

\begin{equation} \label{g2strip}
g_2=  \frac{1}{2}\frac{(16\theta_1^2-8\theta_2-\theta_0+\theta_0^2+\theta_0^3-8\theta_2\theta_0-1)(1-\theta_0)} {(1-\theta_0^2+4\theta_1)(4\theta_1-1+\theta_0^2)}  \,.
\end{equation}

Our present aim is  to estimate  the time of  return of $\theta \in \mathbb A_{1,\pi}$   to the initial value $\theta(0)$ i.e to study the real points $\tau  \neq 0$   such that $\theta(\tau)=\theta(0)$.   For this sake,   we will  use the  {\it a priori} bounds  \eqref{rrr} applied to the $g$--fraction in formula  \eqref{maingstrip}. For $p=2k+1$ one obtains:

\begin{equation} \label{main1}
\left(      1- \displaystyle \frac{2}{\mu  \sqrt{1+\eta(z)}  A_p(\eta(z)) +1 }       \right)  \leq \theta(z) \leq \left(      1- \displaystyle \frac{2}{\mu  \sqrt{1+\eta(z)}  B_p(\eta(z)) +1 }       \right)\,,
\end{equation}
for $z\in \mathbb R$. 

If $p=2k$ then 
\begin{equation} \label{main2}
\left(      1- \displaystyle \frac{2}{\mu  \sqrt{1+\eta(z)}  A_p^+(\eta(z)) +1 }       \right)  \leq   \theta(z) \leq \left(      1- \displaystyle \frac{2}{\mu  \sqrt{1+\eta(z)}  B_p^+(\eta(z)) +1 }       \right)\,,
\end{equation}
for $z\in (0,+\infty)$, and

\begin{equation} \label{main21}
\left(      1- \displaystyle \frac{2}{\mu  \sqrt{1+\eta(z)}  A_p^-(\eta(z)) +1 }       \right)  \leq   \theta(z) \leq \left(      1- \displaystyle \frac{2}{\mu  \sqrt{1+\eta(z)}  B_p^-(\eta(z)) +1 }       \right)\,,  
\end{equation}
for $z\in (-\infty,0]$.

In the next theorem,  for a given $f\in \mathbb A_{1,\pi}$, we will  describe a neighborhood of origin in which $x=0$ is the only solution of $\theta(x)=\theta(0)$.

\begin{theorem} \label{theorem1}
We assume that  all conditions of the Theorem \ref{mainTh} are fulfilled and there exists a non zero $\tau \in \mathbb R$ such that $\theta(\tau)=\theta(0)$. Then $g_1(1-g_1)\neq 0$ and the following inequality  holds
\begin{equation}  \label{bound1} 
|\tau| \geq 2 \left | \log \frac{1-g_1}{g_1} \right | \,.
\end{equation}

 \end{theorem}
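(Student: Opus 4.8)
The plan is to convert the hypothesis $\theta(\tau)=\theta(0)$ into a single scalar equation for the $g$-fraction and then to trap that equation between the order-one a priori bounds of Theorem \ref{T1}. First I would evaluate the representation \eqref{maingstrip} at $z=0$: since every numerator in \eqref{maing} carries a factor $z$, one has $g(0)=1$, so that $\theta(0)=1-\tfrac{2}{\mu+1}$, consistent with the definition of $\mu$. Writing \eqref{maingstrip} at $z=\tau$ as well and equating $\theta(\tau)=\theta(0)$, the identity collapses (using $\mu>0$, which holds because $|\theta(0)|<1$) to
\[
e^{\tau/2}\,g(e^{\tau}-1)=1 .
\]
Setting $s=e^{\tau/2}>0$ and $x=e^{\tau}-1=s^{2}-1\in(-1,+\infty)$, and noting that $s\neq1$ and $x\neq0$ because $\tau\neq0$, this becomes, via $e^{\tau/2}=\sqrt{1+x}$,
\[
g(x)=\frac{1}{\sqrt{1+x}}=\frac{1}{s}.
\]

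Next I would dispose of the degenerate cases so that \eqref{bound1} even makes sense. If $g_1=0$ the fraction \eqref{maing} reduces to the constant $g\equiv1$, and if $g_1=1$ it reduces to $g(z)=1/(1+z)$; in either case the equation above forces $e^{\tau/2}=1$, i.e. $\tau=0$, contradicting the hypothesis. Hence $0<g_1<1$, so $g_1(1-g_1)\neq0$ and $\log\tfrac{1-g_1}{g_1}$ is well defined.

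Now I would apply the bounds of Theorem \ref{T1}a with $k=1$, which hold on all of $(-1,+\infty)$: $A_1(x)\le g(x)\le B_1(x)$ with $A_1,B_1$ as in \eqref{rrr}. Substituting $g(x)=1/s$ and $x=s^{2}-1$ and clearing the positive denominators turns the left inequality into $g_1s^{2}-s+(1-g_1)\ge0$ and the right one into $(1-g_1)s^{2}-s+g_1\ge0$. Both quadratics have discriminant $(1-2g_1)^{2}$, so they factor explicitly as
\[
g_1(s-1)\Bigl(s-\tfrac{1-g_1}{g_1}\Bigr)\ge0,\qquad
(1-g_1)(s-1)\Bigl(s-\tfrac{g_1}{1-g_1}\Bigr)\ge0 .
\]
Writing $t=\log s=\tau/2$ and $\rho=\log\tfrac{1-g_1}{g_1}$, and using that $e^{a}-e^{b}$ has the sign of $a-b$, these two conditions say precisely that $t$ lies outside the open interval between $0$ and $\rho$ and outside the open interval between $0$ and $-\rho$. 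Combined with $t\neq0$ this forces $|t|\ge|\rho|$, that is $|\tau|\ge 2\bigl|\log\tfrac{1-g_1}{g_1}\bigr|$, which is \eqref{bound1}.

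I expect the only delicate point to be the joint sign analysis of the two factored inequalities: one must intersect the two admissible sets for $s$ (equivalently for $t$) and check that, whatever the sign of $1-2g_1$, the region strictly between the two reciprocal critical values $\tfrac{1-g_1}{g_1}$ and $\tfrac{g_1}{1-g_1}$ is excluded, leaving exactly $|t|\ge|\rho|$; the case $g_1=\tfrac12$ ($\rho=0$) being trivial. The remaining algebra is routine, since the perfect-square discriminant makes both factorizations explicit and only the order-one bounds are needed.
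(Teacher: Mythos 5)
Your proof is correct and takes essentially the same route as the paper: both reduce $\theta(\tau)=\theta(0)$ via \eqref{maingstrip} to the scalar equation $e^{\tau/2}g(e^{\tau}-1)=1$ and trap it with the order-one bounds $A_1,B_1$ of Theorem \ref{T1}, your critical values $s=\tfrac{1-g_1}{g_1}$ and $s=\tfrac{g_1}{1-g_1}$ being exactly the paper's $t_1=\tfrac{1-2g_1}{g_1^2}$, $t_2=\tfrac{2g_1-1}{(1-g_1)^2}$ after the substitution $x=s^2-1$. If anything, your write-up is more complete, since the paper leaves implicit both the degenerate cases $g_1\in\{0,1\}$ (needed for the claim $g_1(1-g_1)\neq 0$) and the final interval-exclusion sign analysis, which you carry out explicitly.
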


\begin{proof}
One considers   \eqref{main1} with $p=1$.  We have  $A_1(0)=B_1(0)=1$  and define  $t_{1}$,  $t_2$  as non-zero  solutions  of the following algebraic equations
\begin{equation} \label{equations}
\sqrt{1+t_1}A_1(t_1)=1, \quad \sqrt{1+t_2}B_1(t_2)=1, \quad t_1,t_2\in (-1,+\infty)\,.
\end{equation}

Simple algebraic calculations show that the only  solutions satisfying \eqref{equations}  are  given by
\begin{equation} 
  t_1=\frac{1-2g_1}{g_1^2}, \quad t_2=\frac{2g_1-1}{(1-g_1)^2}\,,
\end{equation}
which are related by
\begin{equation}
\frac{1}{t_1}+\frac{1}{t_2}=-1\,.
\end{equation}
Since $\eta(z)$ is a bijection between $\mathbb R$  and $ (-1,+\infty)$ there exist unique real numbers  $T_1$, $T_2\in \mathbb R$ satisfying the following equations:
\begin{equation}
 \eta(T_{1})=t_{1}, \quad \eta(T_{2})=t_{2}\,.
\end{equation}
As easily seen from  \eqref{rtr}: $T_1=-T_2$ and the proof of \eqref{bound1} follows straightforwardly  from the formula \eqref{hhh}.
\end{proof}

\begin{remark} It is easy to see that  $g_1=1/2$ if and only if $\theta'(0)=0$, as follows from    \eqref{g1strip}.   Hence, the right hand side of the inequality  \eqref{bound1}  is zero if 
$\theta'(0)=0$, that corresponds to trivial case and is strictly positive if $\theta'(0)\neq 0$.
\end{remark}

The next result shows that   $\theta \in  \mathbb A_{1,\pi}$, under some conditions on derivatives $\theta^{(p)}(0)$, $p=0,1,2$,  always returns to the initial value $\theta(0)$  i.e admits the oscillatory property.

\begin{theorem} \label{2strip}
Let $\theta \in \mathbb A_{1,\pi}$, $\theta'(0)>0$  and   $g_1,g_2$ are defined by formulas  \eqref{thetan}  and   \eqref{g1strip}, \eqref{g2strip}.

\noindent I. \,  We assume that  the $(g_1,g_2)\in [0,1]^2$  belongs to one of the two  regions $E$, $F$  defined by:   
\begin{align}
E=\{ (g_1,g_2)\in (0,1)^2 \,: \, D_2 \geq 0,  0< g_1<1/2,0<g_2<1/2 \}\,, \label{k1} \\
F=\{ (g_1,g_2)\in (0,1)^2 \,: \, D_2 \geq 0,  0< g_1<1/2,  1/2<g_2<1   \}\,, \label{k2} \\
D_1=(1-g_1)^2-4g_1^2g_2(1-g_2), \, D_2=g_1^2-4(1-g_1)^2(1-g_2)g_2\,.
\end{align}
Let
\begin{align}
\zeta= \frac{2B}{\pi} \log(t^{(2)}_1) >0 \quad  \mathrm{if}  \quad  (g_1,g_2)\in E \label{imp} \\
\zeta= \frac{2B}{\pi} \log(t^{(2)}_2)<0  \quad  \mathrm{if}  \quad  (g_1,g_2)\in F 
\end{align}
where $t^{(i)}_j$, $i,j=1,2$ are defined as functions of $g_1,g_2$ by 
\begin{equation} \label{t1}
t_{2,1}^{(1)}=\frac{1-g_1\pm\sqrt{D_1}}{2g_1(1-g_2)},\quad 
t_{2,1}^{(2)}=\frac{g_1\pm\sqrt{D_2}}{2(1-g_1)g_2}\,.
\end{equation}

Then there exists $\tau \in \mathbb R$, $\tau \neq 0$  such that 
\begin{equation}  \label{INN1}
 \theta(\tau)=\theta(0)\,,
\end{equation}
and  
\begin{equation} \label{INCLUSION}
\tau\in (0,\zeta)\,  \quad \mathrm{if}\quad  \zeta>0 \quad \mathrm{ and } \quad  \tau\in (\zeta,0) \quad  \mathrm{if}  \quad   \zeta<0\,.   
\end{equation}

\noindent II. \, Let  $\tau \in \mathbb R$, $\tau \neq 0$ be  such that $\theta(\tau)=\theta(0)$, then

\begin{equation} \label{55}
\tau \in (-\infty, \frac{2B}{\pi} \log(t_1^{(1)})]\cup[ \frac{2B}{\pi} \log(t_2^{(1)})  , +\infty)\,. \end{equation}
\end{theorem}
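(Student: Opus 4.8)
The plan is to convert the return condition into an algebraic one and then read off both parts from the order‑two bounds \eqref{main2}, \eqref{main21}. Since the $g$-fraction has $g(0)=1$, formula \eqref{maingstrip} gives $\theta(0)=1-2/(\mu+1)$, and because the map $x\mapsto 1-2/(\mu x+1)$ is strictly increasing for $\mu>0$, the equality $\theta(\tau)=\theta(0)$ is equivalent to $e^{\tau/2}g(e^{\tau}-1)=1$. Putting $t=e^{\tau}-1$ and $s=\sqrt{1+t}=e^{\tau/2}$, the return condition reads $s\,g(t)=1$ with $\tau=2\log s$ (the factor $2B/\pi$ in the statement being the rescaling back to a general $f\in\mathbb A_{M,B}$ via $\theta(z)=\tfrac1M f(Bz/\pi)$). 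The same monotonicity shows that, under $A_2^+\le g\le B_2^+$ for $t>0$ and $B_2^+\le g\le A_2^+$ for $t<0$, the points where a bound meets the level $\theta(0)$ are exactly the nontrivial roots of $sA_2^+(s^2-1)=1$ and $sB_2^+(s^2-1)=1$.

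I would next solve these two equations. Substituting $s=\sqrt{1+t}$ into \eqref{bounds2} clears the radical and produces, in each case, a cubic in $s$ with the obvious root $s=1$ (i.e.\ $\tau=0$); factoring it out leaves the quadratics
\[
q_1(s)=g_1(1-g_2)s^2-(1-g_1)s+g_1g_2,\qquad q_2(s)=g_2(1-g_1)s^2-g_1 s+(1-g_1)(1-g_2),
\]
whose roots are precisely $t^{(1)}_{2,1}$ and $t^{(2)}_{2,1}$ of \eqref{t1}, with discriminants $D_1$ and $D_2$. The useful output is the sign factorization $sA_2^+(s^2-1)-1=-g_1(1-g_2)(s-1)(s-t^{(1)}_1)(s-t^{(1)}_2)/P_1(s)$ and $sB_2^+(s^2-1)-1=g_2(1-g_1)(s-1)(s-t^{(2)}_1)(s-t^{(2)}_2)/P_2(s)$, where $P_1,P_2$ stay positive on the ranges of $s$ in play. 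Evaluating the quadratics at $s=1$ gives $q_1(1)=2g_1-1$ and $q_2(1)=1-2g_1$, and \eqref{g1strip} shows $g_1<\tfrac12\iff\theta'(0)>0$. Hence $q_1(1)<0$, so $t^{(1)}_1<1<t^{(1)}_2$ (in particular $D_1>0$ automatically), while $q_2(1)>0$ puts $1$ outside $[t^{(2)}_1,t^{(2)}_2]$; combined with the root-product $(1-g_2)/g_2$, this forces both $t^{(2)}_j>1$ in region $E$ ($g_2<\tfrac12$) and both $t^{(2)}_j<1$ in region $F$ ($g_2>\tfrac12$), which fixes the sign of $\zeta$.

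For Part II I would argue from the bound that actually obstructs a return. If $\tau>0$ then $s>1$, and the lower estimate in \eqref{main2} forces $sA_2^+(s^2-1)\le1$, i.e.\ $(s-1)(s-t^{(1)}_1)(s-t^{(1)}_2)\ge0$; since $s>1>t^{(1)}_1$ this yields $s\ge t^{(1)}_2$, hence $\tau\ge\tfrac{2B}{\pi}\log t^{(1)}_2$. If $\tau<0$ then $s<1$, and the relevant upper estimate in \eqref{main21} is again governed by $A_2^+$ (as $B_2^-=A_2^+$), forcing $sA_2^+(s^2-1)\ge1$, i.e.\ $(s-1)(s-t^{(1)}_1)(s-t^{(1)}_2)\le0$; since $s<1<t^{(1)}_2$ this yields $s\le t^{(1)}_1$, hence $\tau\le\tfrac{2B}{\pi}\log t^{(1)}_1$. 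These two statements are exactly \eqref{55}.

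Part I is an intermediate-value argument. In region $E$ the point $z=\zeta>0$ corresponds to $s=t^{(2)}_1$, a root of $sB_2^+(s^2-1)=1$; as $B_2^+$ is the \emph{upper} bound in \eqref{main2} for $z>0$, the upper estimate equals $\theta(0)$ at $\zeta$, so $\theta(\zeta)\le\theta(0)$, whereas $\theta'(0)>0$ makes $\theta(z)>\theta(0)$ for small $z>0$, and continuity produces a zero of $\theta-\theta(0)$ in $(0,\zeta]$. Region $F$ is symmetric, with $z=\zeta<0$ and $s=t^{(2)}_2$ a root of the \emph{lower} estimate in \eqref{main21} (which equals $B_2^+$), giving $\theta(\zeta)\ge\theta(0)$ against $\theta(z)<\theta(0)$ for small $z<0$, hence a zero in $[\zeta,0)$. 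I expect the main obstacle to be the algebraic bookkeeping of the second paragraph, chiefly checking that $P_1,P_2$ keep a constant positive sign on the intervals $s\in(1,t^{(2)}_1)$ and $s\in(t^{(2)}_2,1)$, so that the inequalities between $\theta$ and its bounds transfer without reversal; a minor secondary point is the endpoint in Part I, where the borderline case $\theta(\zeta)=\theta(0)$ must be excluded (e.g.\ using $D_2>0$ together with analyticity) to place $\tau$ in the open interval rather than merely in $(0,\zeta]$.
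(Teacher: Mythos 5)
Your proposal is correct and follows essentially the same route as the paper's proof: it uses the order-two \emph{a priori} bounds \eqref{main2}, \eqref{main21}, reduces the return condition to the same two quadratics in $s=e^{z/2}$ (your $q_1,q_2$ are the paper's $P_1,P_2$, with the same roots $t^{(1)}_j$, $t^{(2)}_j$), locates the roots relative to $1$ via the sign at $s=1$, and concludes by the intermediate value argument for Part I and the sign of the $A_2^+$--estimate for Part II. The only departures are cosmetic and, if anything, improvements: your root-product argument ($t_1^{(2)}t_2^{(2)}=(1-g_2)/g_2$) links the regions $E$, $F$ to the root placement more directly than the paper's criterion $L_2=g_1-2(1-g_1)g_2$, and your flag about the borderline case $\theta(\zeta)=\theta(0)$ (open versus half-open interval in \eqref{INCLUSION}) identifies an imprecision that the paper itself leaves unaddressed.
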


\begin{figure}[h]
\includegraphics[scale=0.500,angle=0]{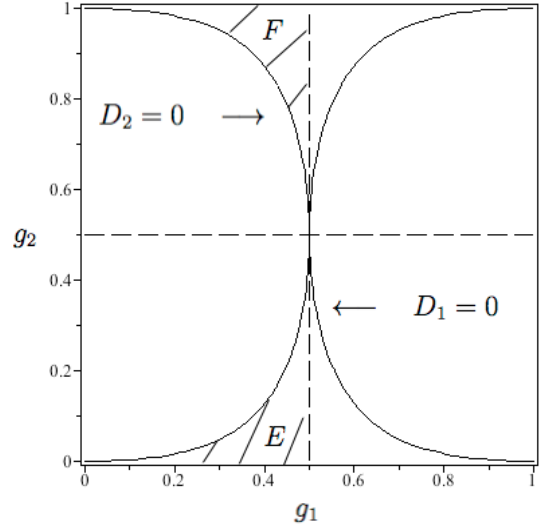} \caption{\label{fig1} Domains  $E$, $F$  in the parameter  space $(g_1,g_2)\in [0,1]^2$.}
\end{figure}

\begin{proof}
We  consider  \eqref{main2}  with   $p=2$  and define  the following real algebraic equations
\begin{align*}
 \sqrt{1+x}\,A_2^+(x)=1\,, \quad x \in (0,+\infty)\,, \quad \mathrm{(A_1)}\,,\\
 \sqrt{1+x}\,B_2^+(x)=1\,, \quad x \in (0,+\infty)\,,  \quad \mathrm{(B_1)}\,, \\
 \sqrt{1+x}\,B_2^-(x)=1\,, \quad x \in (-1,0)\,,  \quad \mathrm{(\tilde A_1)}\,, \\
 \sqrt{1+x}\,A_2^-(x)=1\,, \quad x \in (-1,0)\,.\quad \mathrm{(\tilde B_1)}\,,
\end{align*}
where $A_2^-=B_2^+$, $B_2^-=A_2^+$.

Making  the change of variables 
\begin{equation} \label{changef}
x=-1+t^2, \quad t\in \mathbb R\,,
\end{equation}
after some elementary  transformations, it is easy to show that  equations  ($A_1$), ($B_1$) are  equivalent respectively to quadratic equations ($A_2$) and    ($B_2$) given below
\begin{align*}
  P_1(t)=g_1(1-g_2)t^2-(1-g_1)t+g_1g_2=0,   \quad t\in \mathbb R,    \quad \mathrm{(A_2)}\\
P_2(t)=g_2(1-g_1)t^2-g_1t+(1-g_1)(1-g_2)=0,\quad t\in \mathbb R\,. \quad \mathrm{(B_2)}
\end{align*}
\begin{remark}
We notice that $P_2(t)$ is obtained  from $P_1(t)$  by transformation
\begin{equation}
g_i\mapsto 1-g_i, \quad  i=1,2\,.
\end{equation}
\end{remark}

The  polynomial $P_1(t)=0$ has  two real  roots   $t_1^{(1)},t_2^{(1)}\in \mathbb R$
\begin{equation} 
t_1^{(1)}=\frac{1-g_1-\sqrt{D_1}}{2g_1(1-g_2)}, \quad t_2^{(1)}=\frac{1-g_1+\sqrt{D_1}}{2g_1(1-g_2)}, \quad t_1^{(1)}\leq t_2^{(1)}\,,
\end{equation}
if and only if the following condition holds 
\begin{equation} \label{D1}
D_1=(1-g_1)^2-4g_1^2g_2(1-g_2)\geq 0\,.
\end{equation}
$P_2(t)=0$ has  two real  solutions   $t_1^{(2)},t_2^{(2)}\in \mathbb R$
\begin{equation} \label{t2}
t_1^{(2)}=\frac{g_1-\sqrt{D_2}}{2(1-g_1)g_2}, \quad t_2^{(2)}=\frac{g_1+\sqrt{D_2}}{2(1-g_1)g_2}, \quad t_1^{(2)}\leq t_2^{(2)}\,,
\end{equation}
if and only if 
 \begin{equation} \label{D2}
D_2=g_1^2-4(1-g_1)^2(1-g_2)g_2\geq  0\,.
\end{equation}
Applying the  Vieta's formulas to polynomials $A_2$ and $B_2$, and taking into account that $g_i\in (0,1)$, $i=1,2$  one checks that:
\begin{equation} \label{VI}
t^{(i)}_j> 0, \quad  i,j=1,2\,.
\end{equation}

\noindent Let $f'(0)>0 ( \Leftrightarrow g_1<1/2$). Then $\theta(z)$ is increasing function in the interval $(-\epsilon,\epsilon)$ for some small $\epsilon>0$. We assume that inequality  $D_2 \geq 0$ holds, so both roots $t^{(2)}_1$ and $t^{(2)}_2$ are real. One has $P_2(1)=1-2g_1>0$, so, in view of \eqref{VI},  either 
$$0<t^{(2)}_1\leq t^{(2)}_2<1 \eqno{(a)}$$ 
or 
$$1<t^{(2)}_1\leq t^{(2)}_2 \eqno{(b)}$$

One verifies with help of  \eqref{t2} that (a) is equivalent to 
$$L_2=g_1-2(1-g_1)g_2<0$$
  and (b) to  $L_2>0$.

Thus,  in view of \eqref{changef},  if (b) holds, the equation ($B_1$) will have  solution   
$$ x=-1+{t^{(2)}_1}^2 \in (0, +\infty)\,,$$
and if (a) holds,  ($\tilde B_1$) will have solution  
 $$x=-1+{t^{(2)}_2}^2 \in (-1, 0)\,,$$
 in view of \eqref{changef}.

\noindent One verifies directly that the condition $0<g_1<1/2$ implies $D_1>0$.

So, the   both roots $t_1^{(1)}$ and $t_2^{(1)}$ are real distinct numbers.

Since $P_1(1)=2g_1-1<0$ we have $0<t_1^{(1)}<1<t_2^{(1)}$.  So,  the equation ($A_1$) will have the unique real  solution 
\begin{equation}
y_2=-1+{t_2^{(1)}}^2\in (0, +\infty)\,,
\end{equation}
and  ($\tilde A_1$) will have the unique real  solution
\begin{equation}
 y_1=-1+{t_1^{(1)}}^2\in (-1,0)\,.
\end{equation}

Since $\eta(z)$ is a bijection of $\mathbb R$ and  $(-1,+\infty)$, there exists unique real number  $\zeta\in \mathbb R$
 satisfying equation $\eta(\zeta)=x$ with $x\in(-1,+\infty)$ defined  above. Then, as follows from  \eqref{main2}, \eqref{main21},  there exists  $\tau$ satisfying  \eqref{INN1}  if one of the cases \eqref{k1}-\eqref{k2} holds.  One has   $\tau\in (0,\zeta)$ if $\zeta>0$ and  $\tau\in (\zeta,0)$ if  $\zeta<0$.   

Using $y_{1,2}$ defined above, we define $z_1<0$ and $z_2>0$ as unique real  solutions of $\eta (z_i)=y_i$, $i=1,2$.   Let now $\tau \neq 0$ be such that $\theta(\tau)=\theta(0)$, then  $ \tau \in (-\infty,z_1]\cup [z_2,+\infty)$  that shows \eqref{55}  and finishes the proof.

\end{proof}

 The case  $\theta'(0)<0$ can be analyzed in the similar  way by considering   $\theta(-z)$ instead of $\theta(z)$.

\section{Applications to solutions of  the $ABC$--flow equations} \label{Ap}
The $ABC$--flow  is a system of three ordinary differential equations
\begin{equation} \label{abc}
\frac{dx_1}{dz}=  A\sin x_3+C\cos x_2, \quad \frac{d x_2}{dz}=  B\sin x_1+A\cos x_3, \quad  \frac{d x_3}{dz}=  C\sin x_2+B\cos x_1\, , 
\end{equation}
depending on  three arbitrary  real positive  constants  $(A,B,C)\neq(0,0,0)$.

This vector field    appears as an exact solution of the Euler equation without forcing. It is essential to  the origin of magnetic fields in large astrophysical bodies like the Earth, the Sun and galaxies. The  history of the problem,  including numerous applications,  can be found  in   \cite{David}, \cite{Ash}. Through intensive numerical studies \cite{Dombre}, \cite{X1}, \cite{X2}, it was shown that the dynamics of the $ABC$-flow is generally chaotic.  In particular, it means that, due to exponential instability of its solutions, any kind of a long time prediction of evolutionary dynamics  is problematic.

The rigorous study of integrability, that is   of the existence of conservation lows (first integrals) of the $ABC$-flow  was initiated by Ziglin in the cycle of papers \cite{Z1}-\cite{Z3}.  Applying the complex analytic monodromy approach, he proved  that the $ABC$-flow does not have any non constant meromorphic first integrals. The results of Ziglin have been  generalized and extended later in \cite{Mac} based on the more recent  differential Galois approach of Morales-Ramis  (see \cite{Morales} for details).

The purpose of our  study is  to elaborate on the idea that  some qualitative insight into dynamics of the $ABC$--flow   can be   gained  using the $g$-fractions approach.  More precisely, we aim to establish analytically  the existence of   the  recurrent behavior (Theorem \ref{TTTT}).

We define
\begin{equation} \label{delta}
\delta=\max \{ A+C,B+A,C+B \}>0\,.
\end{equation}

Since the vector field of \eqref{abc} is a bounded one it is complete in $\mathbb R^3$ and hence all its real solutions $(x_1(z),x_2(z),x_3(z))$ are defined for $z\in \mathbb R$.  We will need the following   elementary property, easy to proof
\begin{proposition} \label{boundsTrig}
The trigonometric functions $\cos x$, $\sin x$ are analytic and  bounded in absolute value by $\mathrm{ch}( \epsilon)$ in the complex disk $|x-x_0|<\epsilon$ with a center $x_0\in \mathbb R$.
\end{proposition}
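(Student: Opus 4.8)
The plan is to reduce both bounds to a single elementary identity for the modulus of $\cos$ and $\sin$ on a horizontal line, and then to control the imaginary part by the radius of the disk. Analyticity requires no argument: $\cos$ and $\sin$ are entire, hence holomorphic on every disk $|x-x_0|<\epsilon$, so only the modulus estimate needs to be established.

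First I would write an arbitrary point of the disk as $x=a+ib$ with $a,b\in\mathbb R$ and expand using the addition formulas for complex argument, obtaining $\cos x=\cos a\,\mathrm{ch}(b)-i\sin a\,\mathrm{sh}(b)$ and $\sin x=\sin a\,\mathrm{ch}(b)+i\cos a\,\mathrm{sh}(b)$. Taking squared moduli and using $\mathrm{ch}^2(b)-\mathrm{sh}^2(b)=1$ gives the two clean identities
\begin{equation*}
|\cos x|^2=\mathrm{ch}^2(b)-\sin^2 a,\qquad |\sin x|^2=\mathrm{ch}^2(b)-\cos^2 a\,.
\end{equation*}
Since $\sin^2 a\geq 0$ and $\cos^2 a\geq 0$, both quantities are bounded above by $\mathrm{ch}^2(b)$, whence $|\cos x|\leq \mathrm{ch}(b)$ and $|\sin x|\leq \mathrm{ch}(b)$ on the whole line $\mathrm{Im}(x)=b$.

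It then remains to relate $b$ to $\epsilon$. Because the centre $x_0$ is real, one has $b=\mathrm{Im}(x)=\mathrm{Im}(x-x_0)$, so $|b|\leq |x-x_0|<\epsilon$. The function $\mathrm{ch}$ is even and strictly increasing on $[0,+\infty)$, hence $\mathrm{ch}(b)=\mathrm{ch}(|b|)\leq \mathrm{ch}(\epsilon)$, and combining this with the previous step yields $|\cos x|\leq \mathrm{ch}(\epsilon)$ and $|\sin x|\leq \mathrm{ch}(\epsilon)$ throughout the disk, as claimed.

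There is essentially no serious obstacle, the statement being elementary. The only point requiring a little care is the passage from the bound $\mathrm{ch}(b)$, which depends on the imaginary part alone, to the uniform bound $\mathrm{ch}(\epsilon)$: this uses both the reality of $x_0$ (so that the imaginary part of $x$ coincides with that of $x-x_0$) and the monotonicity of $\mathrm{ch}$ on the positive half-line. One could alternatively avoid the explicit real/imaginary split by estimating $|\cos x|\leq \mathrm{ch}(|\mathrm{Im}\,x|)$ directly from $\cos x=\tfrac12(e^{ix}+e^{-ix})$ and the triangle inequality, but the identity above is sharper and renders the constant transparent.
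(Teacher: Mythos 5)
Your proof is correct. Note that the paper itself offers no proof of this proposition at all: it is stated as an ``elementary property, easy to proof'' and left to the reader, so there is no argument to compare yours against. Your argument is complete and sound: the identities $|\cos x|^2=\mathrm{ch}^2(b)-\sin^2 a$ and $|\sin x|^2=\mathrm{ch}^2(b)-\cos^2 a$ (with $x=a+ib$) immediately give the bound $\mathrm{ch}(|\mathrm{Im}\,x|)$, and the reduction $|b|=|\mathrm{Im}(x-x_0)|\leq|x-x_0|<\epsilon$ together with the monotonicity of $\mathrm{ch}$ on $[0,+\infty)$ finishes the job; you even correctly isolate the one place where the reality of $x_0$ is used. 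If anything, your identity-based route shows the stated bound is not sharp on the open disk (one actually gets the strict inequality $|\cos x|,|\sin x|<\mathrm{ch}(\epsilon)$), which is more than the proposition requires.
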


And we remind   the  classical theorem of Picard (see for example \cite{Picard}) from  the analytic theory of ordinary differential equations:

\begin{theorem}{(Picard \cite{Picard})} \label{picpic}
Let  $Q_j(q_1,q_2,\dots,q_n)$, $ 1\leq j \leq n$ be analytic functions in the complex domain
\begin{equation} \label{cond}
|q_i-\bar q_i|<q'_i,  \quad 1\leq i \leq n\, ,
\end{equation}
for some $q'_i>0$, $\bar q_i\in \mathbb C$,  $1\leq i \leq n$.
We assume that there exist positive constants $Q'_j>0$, $1\leq j \leq n$ such that 
\begin{equation}
|Q_j(q_1,q_2,\dots,q_n)|<Q'_j, \quad 1\leq j\leq n\,,
\end{equation}
if the conditions \eqref{cond} hold.

Then the system of $n$ ordinary equations
\begin{equation}
\frac{dq_j}{d z}=Q_j(q_1,q_2,\dots,q_n),\quad 1\leq  j \leq n\,,
\end{equation}
admits the  unique  solution analytic in the complex disc
\begin{equation}
D\, :\,|z|<T',  \quad T'=\mathrm{min} \left\{  \frac{ q'_1}{Q'_1},\frac{ q'_2}{Q'_2},\dots,\frac{ q'_n}{Q'_n}       \right \}\,,
\end{equation} 
satisfying the  initial conditions $q_i(0)=\bar q_i$, $1\leq i \leq n$.

Moreover, 
\begin{equation} \label{PE}
|q_i(z)-\bar q_i|<q'_i, \quad  z\in D,\quad 1\leq i \leq n\,.
\end{equation}

\end{theorem}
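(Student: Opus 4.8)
The plan is to prove this by Cauchy's method of majorants to obtain a unique local analytic solution, followed by a separate a priori estimate that pushes the radius of analyticity out to the full value $T'$ and establishes the invariance \eqref{PE} of the polydisc. First I would normalize, replacing $q_j$ by $p_j=q_j-\bar q_j$, so that the domain \eqref{cond} becomes the polydisc $\{|p_i|<q'_i\}$ centred at the origin and the initial condition reads $p_j(0)=0$. Seeking a formal solution $p_j(z)=\sum_{k\ge1}c_{j,k}z^k$ and matching powers of $z$ after substitution into the system yields a recursion in which each $c_{j,k}$ is a polynomial with non-negative integer coefficients in the Taylor coefficients of $Q_j$ about $\bar q$ and in the previously determined $c_{j,\ell}$ with $\ell<k$. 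This determines the series uniquely, which already gives the uniqueness part of the statement.

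For convergence I would invoke the majorant construction. Since each $Q_j$ is holomorphic and bounded by $Q'_j$ on the polydisc, Cauchy's inequalities bound its Taylor coefficients, so $Q_j$ is dominated coefficientwise in absolute value by
\[
\widehat Q_j(p)=\frac{Q'_j}{\prod_{i=1}^{n}\left(1-p_i/q'_i\right)}.
\]
Because the coefficient recursion above involves only additions and multiplications of non-negative quantities, the numbers $|c_{j,k}|$ are dominated by the Taylor coefficients of the solution of the majorant problem $\widehat p_j{}'=\widehat Q_j(\widehat p)$, $\widehat p_j(0)=0$. The latter is solved explicitly: the ansatz $\widehat p_j=Q'_j\,\varphi(z)$ decouples the system into a single separable scalar equation for $\varphi$, which is analytic with $\varphi(0)=0$ in a disc of positive radius. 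Consequently the original series converges near $z=0$ and defines a genuine holomorphic solution there.

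The remaining and most delicate point is to reach the sharp radius $T'=\min_i q'_i/Q'_i$ and to prove the inclusion \eqref{PE}, since the majorant above only secures analyticity in a proper subdisc. Here I would abandon the majorant and argue by continuation. Continuing the local solution inside the simply connected disc $|z|<T'$ keeps it single-valued by the monodromy theorem, so let $R$ be the supremum of radii on which the solution extends while remaining inside the open polydisc. Integrating the system along the straight segment from $0$ to $z$ and using $|Q_j|<Q'_j$ wherever the solution stays in the polydisc gives $|q_j(z)-\bar q_j|\le Q'_j|z|$, which for $|z|<T'$ is strictly smaller than $q'_j$. Hence on $|z|<R$ the solution in fact remains in a compact subset of the polydisc bounded away from the boundary; were $R<T'$, the local existence theorem applied with a uniform step at each point of the boundary circle would extend the solution past $R$, contradicting maximality. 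Therefore $R\ge T'$, which yields both the analyticity on $|z|<T'$ and the bound \eqref{PE}. The chief obstacle I anticipate is exactly this last step: the naive majorant loses a constant factor (in the scalar case it delivers only radius $T'/2$), so the sharp constant must come from the a priori estimate, and care is needed to justify single-valuedness and the uniform extension step in the genuinely complex-analytic setting rather than the familiar real one.
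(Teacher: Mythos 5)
Your argument is essentially correct, but there is nothing in the paper to compare it against: the paper does not prove this theorem at all, it quotes it as a classical result and defers the proof to Picard's \emph{Trait\'e d'analyse} \cite{Picard}. Judged on its own, your two-stage route works: Cauchy majorants (with the explicit majorant $\widehat Q_j$ and the decoupling ansatz $\widehat p_j=Q'_j\varphi$) give local existence and uniqueness, and you correctly identify that the sharp radius $T'=\min_i q'_i/Q'_i$ together with the invariance \eqref{PE} must come from the a priori estimate $|q_j(z)-\bar q_j|\le Q'_j|z|$ combined with continuation inside the simply connected disc. For contrast, the proof in the cited source is by successive approximations: with $q_j^{(0)}\equiv\bar q_j$ and $q_j^{(m+1)}(z)=\bar q_j+\int_0^z Q_j\bigl(q^{(m)}(w)\bigr)\,dw$, one checks inductively that every iterate is analytic on $D$ and satisfies $|q_j^{(m)}(z)-\bar q_j|\le Q'_j|z|<q'_j$, so the iteration never leaves the polydisc; a Lipschitz bound on the compact sets $\{|q_j-\bar q_j|\le Q'_j r\}$, $r<T'$, and the factorial estimate for $\sup_{|z|\le r}|q^{(m+1)}_j-q^{(m)}_j|$ then give uniform convergence on every subdisc $|z|\le r<T'$. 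That route delivers the sharp radius and \eqref{PE} in a single pass, avoiding the monodromy/patching bookkeeping your continuation step requires (which you do handle, though it deserves the explicit remark that extensions from nearby boundary points agree on overlaps by local uniqueness and the identity theorem). Two cosmetic points: because of the division by $k$ in the recursion, the coefficients $c_{j,k}$ are non-negative \emph{rational}, not integer, combinations of the data (harmless for majorization), and the strictness of \eqref{PE} follows from the strict bound $|Q_j|<Q'_j$ in the integral estimate, which is worth saying explicitly.
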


Our first result is given  by the following theorem

\begin{lemma} Let $\epsilon>0$ be an arbitrary positive real number. Then every real solution  $(x_1(z),x_2(z),x_3(z))$ of \eqref{abc} is an  analytic  function of $z$  in the complex infinite strip    
\begin{equation} \label{strip}
S_b=     \{ z\in \mathbb C \, : \,  |\mathrm{Im}( z)|<b \}, \quad  b=\frac{1}{\delta}\, \frac{\epsilon  }{ \mathrm{ch} (\epsilon)}\,.
\end{equation}

 \end{lemma}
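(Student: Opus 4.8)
The plan is to apply Picard's theorem (Theorem \ref{picpic}) to the $ABC$-flow system \eqref{abc} at an arbitrary point of a given real solution, and to show that the radius of analyticity it produces is bounded below by $b$ uniformly along the real axis. Fix an arbitrary real base time $z_0\in\mathbb{R}$ and set $\bar q_i=x_i(z_0)\in\mathbb{R}$ for $i=1,2,3$. I would invoke Proposition \ref{boundsTrig}: in each complex disc $|x_i-\bar q_i|<\epsilon$ the functions $\sin x_i$ and $\cos x_i$ are analytic and bounded in absolute value by $\mathrm{ch}(\epsilon)$. Using the definition \eqref{delta} of $\delta$, the right-hand sides $Q_j$ of \eqref{abc} are then bounded on the polydisc $|x_i-\bar q_i|<\epsilon$ by
\begin{equation} \label{QboundPlan}
|Q_1|\leq (A+C)\,\mathrm{ch}(\epsilon),\quad |Q_2|\leq (B+A)\,\mathrm{ch}(\epsilon),\quad |Q_3|\leq (C+B)\,\mathrm{ch}(\epsilon),
\end{equation}
so that each $Q_j$ is bounded by $Q'_j=\delta\,\mathrm{ch}(\epsilon)$ with the common choice $q'_i=\epsilon$.

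Next I would feed these bounds into Picard's radius estimate. With $q'_i=\epsilon$ and $Q'_j=\delta\,\mathrm{ch}(\epsilon)$, the minimum in Theorem \ref{picpic} is
\begin{equation} \label{TprimePlan}
T'=\min_{1\leq i\leq 3}\frac{q'_i}{Q'_i}=\frac{\epsilon}{\delta\,\mathrm{ch}(\epsilon)}=b,
\end{equation}
which is exactly the half-width appearing in \eqref{strip}. Picard's theorem thus guarantees a unique analytic solution on the disc $|z-z_0|<b$ agreeing with the prescribed initial data at $z_0$; by uniqueness of analytic continuation this coincides with the real solution $(x_1,x_2,x_3)$. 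Hence the solution is analytic on the open disc $D_{z_0}=\{\,|z-z_0|<b\,\}$ of radius $b$ centered at $z_0$.

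The remaining step is to pass from discs at individual base points to the full strip $S_b$. Because the bounds \eqref{QboundPlan} depend only on $\epsilon$, $\delta$ and the values $x_i(z_0)\in\mathbb{R}$ through the translated coordinates, the radius $b$ is \emph{independent of the base point} $z_0$ — this uniformity is the crucial feature. Any point $w\in S_b$ satisfies $|\mathrm{Im}(w)|<b$, so taking $z_0=\mathrm{Re}(w)\in\mathbb{R}$ gives $|w-z_0|=|\mathrm{Im}(w)|<b$, whence $w\in D_{z_0}$. Therefore $S_b=\bigcup_{z_0\in\mathbb{R}}D_{z_0}$, and the solution extends analytically to all of $S_b$ as claimed.

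I expect the main obstacle to be not the radius computation, which is a direct substitution, but the bookkeeping required to assemble a single globally analytic function on $S_b$ from the family of local Picard solutions: one must verify that the discs $D_{z_0}$ overlap consistently and that the locally defined analytic extensions agree on overlaps, so that they glue into one well-defined analytic map on $S_b$. This follows from the uniqueness clause of Picard's theorem together with the identity theorem for analytic functions (two analytic functions coinciding with the same real solution on a real segment coincide on the intersection of their discs), but it is the point that genuinely needs the uniformity of $b$ in $z_0$ rather than a merely local statement.
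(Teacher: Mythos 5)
Your proof is correct and follows essentially the same route as the paper: bound the right-hand sides of \eqref{abc} on the polydisc $|x_i-x_i(z_0)|<\epsilon$ by $\delta\,\mathrm{ch}(\epsilon)$ via Proposition \ref{boundsTrig}, apply Picard's Theorem \ref{picpic} to get the uniform radius $T'=\epsilon/(\delta\,\mathrm{ch}(\epsilon))=b$ at every real base point, and cover $S_b=\bigcup_{z_0\in\mathbb{R}}D_{z_0}$. Your explicit treatment of the gluing of the local solutions via uniqueness and the identity theorem is a detail the paper leaves implicit (invoking only autonomy of the system), but it is the same argument.
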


\begin{proof}
 For a given $\epsilon>0$ and a real triplet $(X_1,X_2,X_3)\in \mathbb  R^3$,  we introduce the complex domain $\mathcal D_{X_1,X_2,X_3}^{\epsilon}=\{ (x_1,x_2,x_3)\in \mathbb C^3  \, :\,  | x_i-X_i |<\epsilon, \, i=1,2,3 \}$.   Writing the  system \eqref{abc} in the form $\dot x_i=f_i(x_1,x_2,x_3)$, $i=1,2,3$ one easily verifies  that 
\begin{equation}
 |f_i(P)|<\delta \, \mathrm{ch} (\epsilon), \quad \forall \,  P=(x_1,x_2,x_3) \in { \mathcal D_{X_1,X_2,X_3}^{\epsilon}}, \quad  1\leq i \leq 3\,, 
\end{equation}
in view of  Proposition  \ref{boundsTrig}.

According to the Picard's Theorem \ref{picpic}, the   solution of \eqref{abc}  $z\mapsto (x_1(z),x_2(z),x_3(z))$,   defined by   the initial condition $(x_1(z_0),x_2(z_0),x_3(z_0))=(X_1,X_2,X_3)$,    $z_0\in \mathbb R$,  will be analytic in the disk $D_{z_0,T'}\,:\, |z-z_0|<T'$ of the complex time plane with  $T'=\displaystyle \frac{\epsilon  }{ \delta \, \mathrm{ch} (\epsilon)}$.  Since the system \eqref{abc} is autonomous  and $S_b=\cup _{z_0\in \mathbb R}\, D_{z_0,T'}$ that finishes the proof. 

\end{proof}

\begin{remark}
The function $e(\epsilon)=\epsilon  /   \mathrm{ch}( \epsilon)$ reaches the unique maximal value for $\epsilon\in \mathbb R_+$ which we denote $e_{max}$. The direct computation gives $e_{max}=e(\epsilon_{max})=0.6627$  for $\epsilon_{max}=1.1997$.
\end{remark}

The solutions of the  $ABC$--flow are   defined on the  torus $\mathbb T^3=\{(x_1,x_2,x_3) \, :\,  x_i\, \mathrm{mod}\, 2\pi  \}$.   We aim now   to study  the projections  $\phi_i\, :\, \mathbb R  \to [-1,1]$ defined by 
\begin{equation}
\phi_i(z)=\sin( x_i(z)/2), \quad i=1,2,3\,.
\end{equation}
These functions  are of dynamical importance since  the sections  $x_i=0\mod  2\pi$   in $\mathbb T^3$ can be  viewed  as the zero level surfaces $\phi_i=0$. Indeed,  since $x_i(z)\in [0,2\pi]$ the only solutions of the equation $\phi_i(z)=\sin( x_i(z)/2)=0$  is given by  $x_i(z)=0 \mod  2\pi$.

\begin{lemma} \label{LEM}
Let $z\in \mathbb R \mapsto (x_1(z),x_2(z),x_3(z))$ be an arbitrary  real solution of the $ABC$--flow  \eqref{abc}.  Then all functions $\phi_i(z)=\sin( x_i(z) /2) $, $i=1,2,3$ are analytic in $S_b$ defined by \eqref{strip}  and bounded in absolute value by $M=  \mathrm{ch} (\epsilon/2)$.
\end{lemma}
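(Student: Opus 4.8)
The plan is to combine the previous lemma (which establishes analyticity of each $x_i(z)$ in the strip $S_b$ together with the bound $|x_i(z)-x_i(z_0)|<\epsilon$ for real $z_0$) with Proposition \ref{boundsTrig} applied to the half-angle argument. First I would fix an arbitrary real solution $(x_1(z),x_2(z),x_3(z))$ and recall that, by the preceding lemma, each $x_i(z)$ is holomorphic on $S_b$; since $\sin$ is entire, the composition $\phi_i(z)=\sin(x_i(z)/2)$ is immediately analytic on $S_b$, settling the first assertion without further work.

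For the bound, the key observation is that the Picard estimate \eqref{PE} underlying the previous lemma gives more than analyticity: for every $z\in S_b$ there is a real base point $z_0\in\mathbb R$ with $|z-z_0|<T'$ and $|x_i(z)-x_i(z_0)|<\epsilon$, where $x_i(z_0)\in\mathbb R$. Writing $x_0=x_i(z_0)/2\in\mathbb R$, I would note that the half-angle $x_i(z)/2$ then lies in the complex disk $\{|w-x_0|<\epsilon/2\}$ centered at the real point $x_0$. Proposition \ref{boundsTrig}, applied with the radius $\epsilon/2$ in place of $\epsilon$, yields $|\sin(x_i(z)/2)|<\mathrm{ch}(\epsilon/2)=M$ on that disk, hence at the point $z$ in question. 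Since $z\in S_b$ was arbitrary and the base point $z_0$ varies over $\mathbb R$ as $z$ ranges over $S_b$ (using $S_b=\cup_{z_0\in\mathbb R}D_{z_0,T'}$ exactly as in the previous proof), the bound $|\phi_i(z)|<M$ holds throughout $S_b$, uniformly in $i$.

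The only genuinely careful point, and the place I would watch most closely, is the propagation of the radius from $\epsilon$ to $\epsilon/2$: the lemma bounds $|x_i(z)-x_i(z_0)|$ by $\epsilon$, so the displacement of the halved argument is bounded by $\epsilon/2$, and it is this halved radius that must be fed into Proposition \ref{boundsTrig}. This is precisely why the claimed bound is $\mathrm{ch}(\epsilon/2)$ rather than $\mathrm{ch}(\epsilon)$. I would therefore state explicitly that for $z\in D_{z_0,T'}$ one has $x_i(z)/2\in\{\,|w-x_i(z_0)/2|<\epsilon/2\,\}$ with $x_i(z_0)/2\in\mathbb R$, and only then invoke Proposition \ref{boundsTrig} with center $x_i(z_0)/2$ and radius $\epsilon/2$. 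Everything else is a direct quotation of the already-proved analyticity and covering statement, so no further estimation is needed.
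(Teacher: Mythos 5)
Your proof is correct and follows essentially the same route as the paper's: analyticity via composition of analytic maps, then the Picard estimate \eqref{PE} on each disk $D_{z_0,T'}$ combined with Proposition \ref{boundsTrig} applied to the halved argument, and the covering $S_b=\cup_{z_0\in\mathbb R}D_{z_0,T'}$. In fact you are more explicit than the paper on the one delicate point --- that the displacement of $x_i(z)/2$ is bounded by $\epsilon/2$, so Proposition \ref{boundsTrig} must be invoked with radius $\epsilon/2$, which is exactly why the bound is $\mathrm{ch}(\epsilon/2)$ --- a detail the paper's proof leaves implicit.
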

\begin{proof}
Let $x_j(z)\in (x_1(z),x_2(z),x_3(z))$.  Then, $\phi_j(z)=\sin (x_j(z))$ is analytic in $S_b$ as a composition of analytic maps. Let $z_0\in \mathbb R$, then for  the complex  disc $D_{z_0,b}\, :\,  |z-z_0|<b$  we have obviously $D_{z_0,b}\subset S_b$.  According to \eqref{PE} $\forall \, z \in D_{z_0,b}\, :\, |x_j(z)-x_j(z_0)|<\epsilon$. That  completes the proof in view of  $S_b=\cup _{z_0\in \mathbb R}\, D_{z_0,b}$ and Proposition \ref{boundsTrig}.

\end{proof}

We consider  an arbitrary solution  $\Gamma_{\alpha,\beta}$ of \eqref{abc} starting from the plane $x_1=0 \mod 2 \pi$, and defined by initial conditions of the form 
\begin{equation} \label{temp}
x_1(0)=0, \quad  x_2(0)=\alpha, \quad  x_3(0)=\beta,  \quad \alpha,\beta \in [0,2 \pi]\,.
\end{equation}
 Let $f(z)=\phi_1(z)=\sin(x_1(z)/2)$.  We have $f(0)=0$, $f \in \mathbb A_{M,b}$ and the formulas \eqref{thetan}, \eqref{g1strip}, \eqref{g2strip} give
\begin{equation} \label{suite}
\theta_0=0,\quad \theta_1=\frac{1}{2\pi \delta}\, \frac{\epsilon}{\mathrm{ch}(\epsilon)\mathrm{ch}(\epsilon/2)} (A\sin \beta+C \cos \alpha), \quad  \theta_2=\frac{1}{2\pi^2\delta^2} \, \frac{\epsilon^2}{\mathrm{ch}^2\mathrm{ch}(\epsilon)(\epsilon/2)} AB\cos \beta\,,
\end{equation}
and
\begin{equation}
g_1=\frac{1}{2} \left (  1-4\theta_1   \right), \quad g_2=\frac{1}{2}\left  (1+ \frac{8\theta_2}{1-16\theta_1^2}\right) \,.
\end{equation}

Since $f \in \mathbb A_{M,b}$, according to Theorem \ref{mainTh} we have $g_{1,2}\in [0,1]$. One easily verifies  that the strict inequalities always hold:     $0<g_{1,2}<1$.
Let $\tau\neq 0$ be such that $x_1(\tau)=0 \mod 2 \pi$.  According to Theorem \ref{theorem1} we have the following explicit  lower bound
\begin{equation} \label{5}
|  \tau |    \geq  \frac{4 \, \epsilon}{\pi \,\delta\,  \mathrm{ch}(\epsilon)} \mathrm{arctanh} \left(  \frac{2\,  \epsilon}{   \pi \, \delta\, \mathrm{ch}(\epsilon/2) \mathrm{ch}(\epsilon)   }     ( A\sin \beta+C\cos \alpha   ) \right)  \,,
\end{equation}
which is true for any $(\alpha,\beta)\in [0,2\pi]^2$ and  arbitrary $\epsilon>0$, $\delta$ is defined by \eqref{delta}. 

From the dynamical point of view, this  means that the solution $\Gamma_{\alpha,\beta}$ leaving  the plane $x_1=0$,  can not return back earlier than permitted   by \eqref{5}.  In practice,  one can use   a freedom in the  chose of   $\epsilon$ in order to make  \eqref{5} optimal.  The more precise lower bound, involving $g_2$ i.e the second derivative of $f$ at $0$, is given by \eqref{55}. The corresponding to \eqref{5} formula  is straightforward   to obtain.

Now we will analyze the upper bound on $\tau$  given by Theorem \ref{2strip} assuming that $\dot x_1(0)>0 \Leftrightarrow g_1<1/2$.

The condition  \eqref{k1}  defines  the region $E$ in the parameter space $(\alpha,\beta)\in [0,2\pi]^2$ and   is   equivalent to the system of inequalities 
\begin{equation} \label{it}
 \theta_1>0,\quad \theta_2\leq -\frac{\sqrt{\theta_1}}{2}(1-4\theta_1)\,.
\end{equation}
One can  show that for any positive constants $A,B,C$ and $\epsilon$  the above set  of parameter values  $(\alpha,\beta)$ is  non empty in  $[0,2 \pi]^2$  . Indeed,  it is sufficient to put $\alpha=\pi/2$  in  \eqref{suite}  and consider  the values ($\beta \to \pi$, $\beta <\pi$) to satisfy \eqref{it}.  Therefore, by the continuity argument, we can state  the following 

\begin{theorem} \label{TTTT}

The Poincar\'e section $\Pi =\{ (x_1,x_2,x_3)\in \mathbb T^3 \, | \, x_1=0 \mod 2 \pi\}$ of the $ABC$--flow \eqref{abc}  contains  a positive measure set of initial conditions  $(\alpha,\beta)$ for which the solutions $\Gamma_{\alpha,\beta}$ return to $\Pi$.   
 \end{theorem}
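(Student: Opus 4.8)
The plan is to view Theorem \ref{TTTT} as a measure--theoretic strengthening of the oscillation criterion already established in Theorem \ref{2strip}. For the solution $\Gamma_{\alpha,\beta}$ the rescaled function $\theta$ attached to $f=\phi_1=\sin(x_1/2)$ satisfies $\theta(0)=f(0)=0$, and any nonzero $\tau$ with $\theta(\tau)=\theta(0)$ forces $\sin(x_1/2)=0$ at the corresponding time, that is $x_1=0\bmod 2\pi$, so $\Gamma_{\alpha,\beta}$ returns to $\Pi$. It therefore suffices to produce a set of positive Lebesgue measure in the square $(\alpha,\beta)\in[0,2\pi]^2$ on which the hypotheses of Theorem \ref{2strip}, part I, hold. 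As recorded before the statement, membership $(g_1,g_2)\in E$ translates, in the variables $(\alpha,\beta)$, into the two inequalities \eqref{it}: $\theta_1>0$ and $\theta_2\leq-\tfrac{\sqrt{\theta_1}}{2}(1-4\theta_1)$, with $\theta_1,\theta_2$ the explicit trigonometric functions of $(\alpha,\beta)$ from \eqref{suite}.

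First I would note that $(\alpha,\beta)\mapsto\theta_1$ and $(\alpha,\beta)\mapsto\theta_2$ are real--analytic, writing $\theta_1=c_1(A\sin\beta+C\cos\alpha)$ and $\theta_2=c_2\cos\beta$ with positive constants $c_1,c_2$ depending only on $A,B,C,\epsilon$. Then I would exhibit one point at which \eqref{it} holds with \emph{strict} inequalities. Taking $\alpha=\pi/2$ annihilates the $\cos\alpha$ term, leaving $\theta_1=c_1A\sin\beta$; for $\beta<\pi$ close to $\pi$ this is strictly positive but tends to $0^+$, while $\theta_2=c_2\cos\beta$ tends to $-c_2<0$ and stays bounded away from $0$. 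Since $\theta_1\to0^+$ drives the barrier $-\tfrac{\sqrt{\theta_1}}{2}(1-4\theta_1)$ to $0^-$, there is a range of $\beta$ just below $\pi$ for which both $\theta_1>0$ and $\theta_2<-\tfrac{\sqrt{\theta_1}}{2}(1-4\theta_1)$ hold strictly.

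Finally I would upgrade this single point to a set of positive measure by continuity. On the open set $\{\theta_1>0\}$ the map $(\alpha,\beta)\mapsto\bigl(\theta_1,\ \theta_2+\tfrac{\sqrt{\theta_1}}{2}(1-4\theta_1)\bigr)$ is continuous, so the locus where both strict inequalities hold is a non--empty open subset of $[0,2\pi]^2$ and therefore has positive Lebesgue measure. There $(g_1,g_2)$ lies in the interior of $E$---the already established bounds $0<g_{1,2}<1$ keep us inside the admissible square---so Theorem \ref{2strip}, part I, supplies a nonzero $\tau$ with $\theta(\tau)=\theta(0)$ and hence a genuine return of $\Gamma_{\alpha,\beta}$ to $\Pi$, which is the assertion.

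The delicate step is the non--emptiness: one must find $(\alpha,\beta)$ where $\theta_1$ is positive yet small enough that the square--root term $\tfrac{\sqrt{\theta_1}}{2}(1-4\theta_1)$ cannot spoil $D_2\geq0$, while $\theta_2$ stays safely negative. The choice $\alpha=\pi/2$, $\beta\uparrow\pi$ is precisely tuned for this, decoupling $\theta_1$ from $\alpha$ and sending $\theta_1\to0$ while $\theta_2\to-c_2$, so that the vanishing right--hand side is dominated; everything afterwards is soft continuity.
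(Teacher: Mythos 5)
Your proposal is correct and follows essentially the same route as the paper: the paper also reduces Theorem \ref{TTTT} to verifying the inequalities \eqref{it} (equivalent to $(g_1,g_2)\in E$) on a set of positive measure, exhibits the same witness point $\alpha=\pi/2$, $\beta\to\pi^-$ where $\theta_1\to 0^+$ while $\theta_2$ stays negative and bounded away from zero, and then invokes a continuity argument to pass to an open, hence positive-measure, set. Your write-up merely makes the paper's terse ``continuity argument'' explicit (strict inequalities at one point plus openness of the strict locus), which is a faithful elaboration rather than a different proof.
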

 
The corresponding upper bounds for the time of return  $\tau$ can be calculated with help of    \eqref{imp}. 

\section{Conclusion}

The presented work was motivated  by the ideas  of Poincar\'e and Sundman \cite{Diacu} from the Celestial Mechanics  providing converging time series solutions for the $3$--body problem.    These ones  follows  from the analyticity of regularized  collision free solutions   in the infinite complex strip of the time plane. Unfortunately, these series, though existing  for all values of time, have very slow convergence. One would have  to sum up milliards of terms  to gain any significant qualitative information  above the motion of particles.  The present study was designed to test the hypothesis that this gap  can be overcome by replacing the power series with functional continued fractions.   We notice that this  issue has not yet been addressed fully in the literature and a number of aspects of the $g$--fractions approach  presented here require further investigation.   As a toy model,  we consider the  $ABC$--flow  system \eqref{abc}   exhibiting the chaotic behavior.  In particular, this system can not be solved by quadratures and being non--integrable, does not have any  globally defined analytic  conservation  laws.   
Nevertheless,  as shown by Lemma \ref{LEM}, all solutions of this system belong to the class of real analytic bounded functions and thus admit the $g$--fraction representation as stated  by Theorem \ref{mainTh}.  Applying the rational approximation, we can derive the upper and lower bounds on the time of the first return  for a Poincar\'e map of the $ABC$-flow which was defined in Section  \ref{Ap}.  As a conclusion, we establish by  Theorem \ref{TTTT} the  existence of a  non empty set  of trajectories   of the $ABC$--flow,  exhibiting  the recurrent behavior. 

\section{Acknowledgments}
The author is grateful to  anonymous referees for their useful remarks and suggestions.

\end{document}